\NeedsTeXFormat{LaTeX2e}[1995/12/01]
\RequirePackage{scrlfile} 
\AfterClass{article}{     
  \RequirePackage{varioref}
}
\documentclass{icmcta4}

\usepackage{mathptmx}       
\usepackage{helvet}         
\usepackage{courier}        
\usepackage[english]{babel}  
\usepackage{amssymb} 
\usepackage{amsmath}
\usepackage[bottom]{footmisc}
\usepackage{graphicx}        
\usepackage{color}           
\usepackage{xspace}          
\usepackage{url}             
\usepackage{cite}            
\usepackage{multirow}        
\usepackage{bm}
\usepackage{booktabs}
\usepackage[utf8]{inputenc}
\usepackage{nicefrac}
\usepackage{tabularx}

\makeindex             

\usepackage{makeidx}         


\usepackage{cleveref}   
\renewcommand{\cref}[1]{\Cref{#1}}
\renewcommand{\vref}[1]{\Vref{#1}}

\let\endproofi\endproof
\def\endproof{\qed\endproofi}
\usepackage{IEEEtrantools}




\newcommand\ifnull[3]{%
  \ifx\null#1%
    #2%
  \else%
    #3%
  \fi}

\newcommand{\word}[1]{\textnormal{#1}}
\newcommand\half{\tfrac 1 2}
\newcommand\defeq{\triangleq}            
\newcommand\modop{\ \word{mod}\ }         
\newcommand{\mo}{{-1}}                   
\newcommand{\T}[1]{^{(#1)}}              
\renewcommand\vec[1]{\bm{#1}}            

\newcommand{\rev}[2][\null]{%
\ifnull{#1}{\overline{#2}}{
\presuper{[#1]}{#2}\overline{#2}}}

\newcommand\F{\mathbb F\xspace} 

\newcommand\NN{\mathbb N\xspace}

\newcommand{\Code}{\mathcal C}
\newcommand\Errs{\mathcal E} 
\newcommand\errs{\epsilon} 

\begin{document}
\title*{Power Decoding of Reed--Solomon Codes Revisited}

\author{Johan S.~R.~Nielsen}
\institute{Johan S.~R.~Nielsen \at Ulm University, Institute of Communications Engineering,
                  Ulm, Germany, \email{jsrn@jsrn.dk}}
\maketitle

\abstract{
  Power decoding, or ``decoding by virtual interleaving'', of Reed--Solomon codes is a method for unique decoding beyond half the minimum distance.
  We give a new variant of the Power decoding scheme, building upon the key equation of Gao.
  We show various interesting properties such as behavioural equivalence to the classical scheme using syndromes, as well as a new bound on the failure probability when the powering degree is 3.
}

\keywords{
Reed-Solomon code,
Algebraic decoding,
Power decoding
}

\section{Introduction}

Power decoding was originally developed by Schmidt, Sidorenko and Bossert for low-rate Reed--Solomon codes (RS) \cite{schmidt_syndrome_2010}, and is usually capable of decoding almost as many errors as the Sudan decoder \cite{sudan_decoding_1997} though it is a unique decoder.
If an answer is found, this is always the closest codeword, but in some cases the method will fail; in particular, this happens if two codewords are equally close to the received. 
With random errors this seems to happen exceedingly rarely, though a bound for the probability has only been shown for the simplest case of powering degree 2 \cite{schmidt_syndrome_2010,zeh_unambiguous_2012}.

The algorithm rests on the surprising fact that a received word coming from a low-rate RS code can be ``powered'' to give received words of higher-rate RS codes having the same error positions.
For each of these received words, one constructs a classical key equation by calculating the corresponding syndromes and solves them simultaneously for the same error locator polynomial.

Gao gave a variant of unique decoding up to half the minimum distance \cite{gao_new_2003}: in essence, his algorithm uses a different key equation and with this finds the information polynomial directly.
We here show how to easily derive a variant of Power decoding for Generalised RS (GRS) codes, Power Gao, where we obtain multiple of Gao's type of key equation, and we solve these simultaneously.

We then show that Power Gao is \emph{equivalent} to Power syndromes in the sense that they will either both fail or both succeed for a given received word.
Power Gao has some ``practical'' advantages, though: it extends Power decoding to the case of using 0 as an evaluation point (which Power syndromes does not support); and the information is obtained directly when solving the key equations, so finding roots of the error locator and Forney's formula is not necessary.

The main theoretical advantage is that Power Gao seems easier to analyse: in particular, we show two new properties of Power decoding: 1) that whether Power decoding fails or not depends only on the error and not on the sent codeword; and 2) a new bound on the failure probability when the powering degree is 3.

We briefly sketched Power Gao already in \cite{nielsen_generalised_2013}, but its behaviour was not well analysed and its relation to Power syndromes not examined.

In \cref{sec:keyeq} we derive the powered Gao key equations, and in \cref{sec:solving} we describe the complete algorithm and discuss computational complexity issues.
In \cref{sec:props} we show the behavioural equivalence to Power syndromes as well as the new properties on Power decoding.

\section{The Key Equations}
\label{sec:keyeq}

Consider some finite field $\F$.
The $[n,k,d]$ Generalised Reed-Solomon (GRS) code is the set
\[
\Code = \big\{ \big( \beta_1f(\alpha_1), \ldots, \beta_nf(\alpha_n) \big) \mid f \in \F[x] \land \deg f < k \big\}
\]
where $\alpha_1,\ldots,\alpha_n \in \F$ are distinct, and the $\beta_1,\ldots,\beta_n \in \F$ are non-zero (not necessarily distinct).
The $\alpha_i$ are called \emph{evaluation points} and the $\beta_i$ \emph{column multipliers}.
$\Code$ has minimum distance $d = n-k+1$ and the code is therefore MDS.

Consider now that some $\vec c = (c_1,\ldots, c_n)$ was sent, resulting from evaluating some $f \in \F[x]$, and that $\vec r = (\beta_1 r_1,\ldots, \beta_n r_n) = \vec c + (\beta_1 e_1, \ldots, \beta_n e_n)$ was the received word with (normalised) error $\vec e = (e_1,\ldots, e_n)$.
Let $\Errs = \{ i \mid e_i \neq 0 \}$ and $\errs = |\Errs|$.
In failure probability considerations, we consider the $|\F|$-ary symmetric channel.

Introduce $G \defeq \prod_{i=1}^n(x-\alpha_i)$, and for any integer $t \geq 1$, let $R\T t$ be the Lagrangian polynomial through the ``powered'' $\vec r$, i.e.~the minimal degree polynomial satisfying $R\T t(\alpha_i) = r_i^t$ for $i=1,\ldots,n$.
Naturally, we have $\deg R\T t \leq n-1$ and $R\T t$ can be directly calculated by the receiver.
As usual for key equation decoders, the algorithm will revolve around the notion of error locator: $\Lambda = \prod_{j\in\Errs}(x-\alpha_j)$.
Choose now some $\ell \in \NN$ subject to $\ell(k-1) < n$.
Then we easily derive the powered Gao key equations:
\begin{proposition}
  \label{power_gao}
  $\Lambda R\T t \equiv \Lambda f^t \mod G$
\end{proposition}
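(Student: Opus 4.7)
The plan is to verify the congruence by showing that $G$ divides $\Lambda(R\T t - f^t)$, exploiting that $G$ splits into distinct linear factors $(x-\alpha_i)$, so divisibility is equivalent to vanishing at every evaluation point $\alpha_i$.

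First I would unwind the definitions. From $\vec r = \vec c + (\beta_1 e_1,\ldots,\beta_n e_n)$ and $c_i = \beta_i f(\alpha_i)$ we get $r_i = f(\alpha_i) + e_i$. Raising to the $t$-th power gives $r_i^t = (f(\alpha_i)+e_i)^t$, which collapses to $f(\alpha_i)^t$ whenever $i\notin\Errs$.

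Next I would split the check into the two natural cases at each $\alpha_i$. For $i\in\Errs$, the factor $(x-\alpha_i)$ appears in $\Lambda$, so $\Lambda(\alpha_i)=0$ and the product $\Lambda(\alpha_i)\bigl(R\T t(\alpha_i)-f(\alpha_i)^t\bigr)$ vanishes regardless of the other factor. For $i\notin\Errs$, instead the second factor vanishes: by definition of the Lagrangian interpolant $R\T t(\alpha_i)=r_i^t$, and by the observation above $r_i^t=f(\alpha_i)^t$, so $R\T t(\alpha_i)-f(\alpha_i)^t=0$. Hence the polynomial $\Lambda(R\T t-f^t)$ has every $\alpha_i$ as a root, and since the $\alpha_i$ are distinct, $G=\prod_i(x-\alpha_i)$ divides it, which is exactly the claim.

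There isn't really a hard step; the only thing worth being careful about is keeping the normalisation straight, namely that the column multipliers $\beta_i$ drop out of the scalar $r_i$ so that $R\T t$ is genuinely interpolating the unscaled powers $(f(\alpha_i)+e_i)^t$. Once that bookkeeping is correct the two-case argument finishes things immediately, and no use is made of the degree bound $\ell(k-1)<n$, which is only relevant for the subsequent decoding step, not for the congruence itself.
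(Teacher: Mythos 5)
Your proof is correct and is essentially the paper's argument: both reduce the congruence modulo $G=\prod_i(x-\alpha_i)$ to evaluation at the distinct points $\alpha_i$ and split into the cases $i\in\Errs$ (where $\Lambda(\alpha_i)=0$) and $i\notin\Errs$ (where $R\T t(\alpha_i)=r_i^t=f(\alpha_i)^t$). Your remark that the $\beta_i$ cancel in the normalised $r_i$ and that $\ell(k-1)<n$ is not needed here is accurate bookkeeping, nothing more.
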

\begin{proof}
  Polynomials are equivalent modulo $G$ if and only if they have the same evaluation at $\alpha_1,\ldots,\alpha_n$.
  For $\alpha_i$ where $e_i \neq 0$, both sides of the above evaluate to zero, while for the remaining $\alpha_i$ they give $\Lambda(\alpha_i){r_i}^t = \Lambda(\alpha_i)f(\alpha_i)^t$.
\end{proof}

\section{The Decoding Algorithm}
\label{sec:solving}

The key equations of \cref{power_gao} are non-linear in $\Lambda$ and $f$, so the approach for solving them is to relax the equations into a linear system, similarly to classical key equation decoding.
We will ignore the structure of the right hand-sides and therefore seek polynomials $\lambda$ and $\psi\T 1,\ldots,\psi\T \ell$ such that $\lambda R\T t \equiv \psi\T t \mod G$ as well as $\deg \lambda + t(k-1) \geq \deg \psi\T t$ for $t=1,\ldots,\ell$.
We will call such $(\lambda,\psi\T 1,\ldots, \psi\T \ell)$ \emph{a solution} to the key equations.

Clearly $(\Lambda, \Lambda f, \ldots, \Lambda f^\ell)$ is a solution.
There are, however, infinitely many more, so the strategy is to find a solution such that $\deg \lambda$ is minimal; we will call this the \emph{minimal solution}.
Thus decoding can only succeed when $\Lambda$ has minimal degree of all solutions.
The probability of this occurring will be discussed in \cref{sec:props}.

Conceptually, Power Gao decoding is then straightforward: pre-calculate $G$ and from the received word, calculate $R\T 1, \ldots, R\T \ell$.
Find then a minimal solution $(\lambda,\psi_1,\ldots,\psi_\ell)$ with $\lambda$ monic.
If this has the valid structure of $(\Lambda, \Lambda f, \ldots, \Lambda f^\ell)$, then return $f$.
Otherwise, declare decoding failure.

For Power syndromes, the key equations are similar to ours except that the modulo polynomials are just powers of $x$.
In this case, finding a minimal solution is known as multi-sequence shift-register synthesis, and the fastest known algorithm is an extension of the Berlekamp--Massey algorithm \cite{schmidt_syndrome_2010} or the Divide-\&-Conquer variant of this \cite{sidorenko_fast_2011}.
These can not handle the modulus $G$ that we need, however.

A generalised form of multi-sequence shift-register synthesis was considered in \cite{nielsen_generalised_2013}, and several algorithms for finding a minimal solution were presented.
The key equations for our case fit into this framework.
We refer the reader to \cite{nielsen_generalised_2013} for the details on these algorithms, but the asymptotic complexities when applied to Power Gao decoding are given in \vref{tab:compl}.
The same complexities would apply to Power syndromes and also match the algorithms \cite{schmidt_syndrome_2010,sidorenko_fast_2011} mentioned before.
The other steps of the decoding are easily seen to be cheaper than this; e.g.~the calculation of $R\T 1, \ldots, R\T \ell$ by Lagrangian interpolation can be done trivially in $O(\ell n^2)$ or using fast Fourier techniques in $O(\ell n \log^2 n)$ \cite[p. 231]{von_zur_gathen_modern_2012}.
Thus Power Gao decoding is asymptotically as fast as Power syndromes.

\section{Properties of the Algorithm}
\label{sec:props}

Power Gao will fail if $(\Lambda, \Lambda f, \ldots, \Lambda f^\ell)$ is not the found minimal solution, so the question is when one can expect this to occur.
Since the algorithm returns at most one codeword, it \emph{must} fail for some received words whenever $\errs \geq d/2$.
Whenever an answer is found, however, this must correspond to a closest codeword: any closer codeword would have its own corresponding error locator and information polynomial, and these would yield a smaller solution to the key equations.

We first show that Power syndromes is behaviourally equivalent to Power Gao.
We will need to assume that the evaluation points $\alpha_i \neq 0$ for all $i$, which is a condition for Power syndromes decoding.
This implies $x \nmid G$.
We will use a ``coefficient reversal'' operator defined for any $p \in \F[x]$ as $\rev p = x^{\deg p}p(x^\mo)$.

In Power syndromes decoding, one considers $\vec r\T t = (\beta_1 r_1^t,\ldots,\beta_n r_n^t)$ for $t=1,\ldots,\ell$ as received words of GRS codes with parameters $[n, t(k-1)+1, n-t(k-1)]$, resulting from evaluating $f^t$; these ``virtual'' codes have the same evaluation points and column multipliers as $\Code$.
The $\vec r\T t$ will therefore have the same error positions as $\vec r$, so the same error locator applies.
For each $t$, we can calculate the syndrome $S\T t$ corresponding to $\vec r\T t$,  which can be written as
\[
  S\T t = \Big(\sum_{i=1}^n \frac {r^t_i \zeta_i} {1 - x\alpha_i}  \modop x^{n-t(k-1)+1} \Big)
\]
where $\zeta_i = \prod_{j \neq i}(\alpha_i-\alpha_j)^\mo$; see e.g.~\cite[p. 185]{roth_introduction_2006}.
By insertion one sees that
\[
  \rev \Lambda S\T t \equiv \Omega\T t \mod x^{n-t(k-1)+1}  ,\quad t=1,\ldots, \ell
\]
where $\Omega\T t$ is a certain polynomial satisfying $\deg \Omega\T t < \deg \Lambda$.
Note that we are using $\Lambda$ reversed; indeed, one often defines error-locator as $\prod_{i \in \Errs}(1-x\alpha_i) = \rev \Lambda$ when considering the syndrome key equation.
The decoding algorithm follows simply from finding a minimal degree polynomial $\rev \lambda$ such that $\omega\T t = (\rev \lambda S\T t \modop x^{n-t(k-1)+1})$ satisfies $\deg \lambda > \deg \omega\T t$ for all $t$.
The decoding method fails if $\rev\lambda \neq \gamma \rev\Lambda, \forall\gamma \in \F$.
We now have:

\begin{table}[t]
  \centering
  \caption{Complexities of solving the key equations for the three approaches discussed in \cite{nielsen_generalised_2013}.
  }
  \label{tab:compl}
  \newdimen\height
  \def\thetable{%
    \begin{tabular}{@{}l@{\hspace{1em}}l@{}}
      Algorithm  &   $O$-complexity \\ \midrule
      Mulders--Storjohann &  $\ell^2 n^2$ \\
      Alekhnovich & $\ell^3 n \log^2 n \log\log n$ \\
      Demand--Driven* & $\ell n^2[\log n \log\log n]$
    \end{tabular}
  }
  \setbox0=\vbox{\thetable}
  \height=\ht0 \advance\height by \dp0
  \begin{tabularx}{\textwidth}{@{}X@{\hspace{2em}}p{5cm}@{\quad}}
      \centering
      \thetable
  & 
    \footnotesize{
    \vspace*{-.5\height}
    *: If $\Code$ is cyclic, then $G = x^n - 1$ since the $\alpha_i$ form a multiplicative group, and in this case the log-factors in square brackets can be removed.
    }
  \end{tabularx}
\end{table}

\begin{proposition}
  \label{prop:power_syn_fail_gao}
  Decoding using Power Gao fails if and only if decoding using Power syndromes fails.
\end{proposition}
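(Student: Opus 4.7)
The plan is to show that a polynomial $\lambda$ is admissible as the left-hand side for the Power Gao key equations if and only if $\rev\lambda$ is admissible for the Power syndromes key equations, with the bijection preserving $\deg\lambda$. Once this is established, both algorithms find the same minimum-degree element, and since both succeed exactly when this minimum is $\gamma\Lambda$, they succeed or fail on precisely the same received words.

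First, I would reformulate ``success'' uniformly. Using the hypothesis $\ell(k-1)<n$, if the minimum-degree admissible $\lambda$ for Power Gao equals $\gamma\Lambda$, then $\psi^{(t)}$ is forced to be $\gamma\Lambda f^t$ (the unique representative of $\lambda R^{(t)}$ modulo $G$ of degree less than $n$), so the validity check passes; conversely if the check passes, the minimum is $\gamma\Lambda$. Power syndromes is phrased directly on $\rev\lambda$, and again succeeds iff the minimum-degree $\rev\lambda$ equals $\gamma\rev\Lambda$. So it suffices to show that the set of admissible $\lambda$ is the same on the two sides.

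The bridge is the rational-function identity
\[
  \tilde S^{(t)}\cdot\rev G \;=\; x^{n-1-\deg R^{(t)}}\,\rev{R^{(t)}}\qquad\text{in }\mathbb F[[x]],
\]
where $\tilde S^{(t)}:=\sum_i r_i^t\zeta_i/(1-x\alpha_i)$ is the full power series whose truncation modulo $x^{n-t(k-1)+1}$ is $S^{(t)}$. This identity comes from applying $x\mapsto 1/x$ to the partial-fraction expansion $R^{(t)}/G=\sum_i r_i^t\zeta_i/(x-\alpha_i)$ and rearranging via the reversal operator; the hypothesis $\alpha_i\neq 0$ for all $i$ makes $\rev G$ a unit in $\mathbb F[[x]]$, so division by $\rev G$ is legal.

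Given a Gao relation $\lambda R^{(t)} = q^{(t)}G + \psi^{(t)}$, I would reverse both sides at the common degree $\deg\lambda+\deg R^{(t)}$, divide by $\rev G$ in $\mathbb F[[x]]$, and multiply by $x^{n-1-\deg R^{(t)}}$ to obtain
\[
  \rev\lambda\cdot\tilde S^{(t)} \;=\; x^{n-1-\deg R^{(t)}}\,\rev{q^{(t)}} \;+\; x^{n-1+\deg\lambda-\deg\psi^{(t)}}\,\rev{\psi^{(t)}}/\rev G.
\]
Truncating modulo $x^{n-t(k-1)+1}$ and using the Gao bound on $\psi^{(t)}$, the first summand contributes a polynomial of degree $\deg\lambda-1$, while the valuation of the second summand is large enough that its truncation combines cleanly with the first to give an $\omega^{(t)}$ satisfying $\deg\omega^{(t)}<\deg\lambda$. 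The same manipulation run in reverse recovers a Gao representation from a syndromes solution, establishing the bijection.

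The main technical obstacle is the degree bookkeeping: the Gao bound $\deg\psi^{(t)}\leq\deg\lambda+t(k-1)$ has to correspond, via the valuation $B=n-1+\deg\lambda-\deg\psi^{(t)}$ of the remainder term, exactly to the syndromes bound $\deg\omega^{(t)}<\deg\lambda$. Once one verifies (for every $t=1,\ldots,\ell$) that these bounds are really equivalent under the reversal-and-divide manipulation, the two admissible sets are equal, their minima coincide, and the equivalence of success and failure follows.
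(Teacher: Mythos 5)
Your proposal is correct and takes essentially the same route as the paper: both arguments rewrite the Gao relation $\lambda R^{(t)} = q^{(t)}G + \psi^{(t)}$ via coefficient reversal, use $x\nmid G$ to divide by $\overline{G}$ and identify the syndrome with $\overline{R^{(t)}}\,\overline{G}^{\,-1}$ modulo the relevant power of $x$, and then verify that the degree bound on $\psi^{(t)}$ translates exactly into $\deg\omega^{(t)}<\deg\lambda$ (and conversely). The only cosmetic difference is your power-series packaging of the division by $\overline{G}$; just note that the resulting congruence holds modulo $x^{\,n-t(k-1)-1}$ rather than $x^{\,n-t(k-1)+1}$, an off-by-two that is also present in the paper's own display of the syndrome key equation.
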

\begin{proof}
  Note first that $R\T t = \sum_{i=1}^n r_i^t \zeta_i \prod_{j \neq i}(x-\alpha_j)$.
  By insertion we get $S\T t \equiv \rev R\T t \rev G^\mo \mod x^{n-t(k-1)+1}$ (since $x \nmid G$).
  Power Gao fails if there is some $\lambda \in \F[x]$ which is not a constant times $\Lambda$ and such that $\deg \lambda \leq \deg \Lambda$ and $\psi\T t = (\lambda R\T t \modop G)$ has $\deg \psi\T t < \deg \lambda + t(k-1) + 1$ for each $t = 1,\ldots,\ell$.
  This means there must be some $\omega\T t$ with $\deg \omega\T t \leq \deg \lambda-1$ such that
  \begin{IEEEeqnarray*}{rCl+c}
    \lambda R\T t - \omega\T t G                     & = & \psi & \iff \\
    \rev \lambda \, \rev R\T t - \rev \omega\T t \rev G & = & \rev\psi\T t x^{\deg G + \deg \lambda - 1 - (\deg \lambda+t(k-1))} & \implies \\
    \rev \lambda \, \rev R\T t &\equiv& \rev \omega\T t \rev G  \mod x^{n-t(k-1)-1}
  \end{IEEEeqnarray*}
  Dividing by $\rev G$, we see that $\rev \lambda$ and the $\rev \omega\T t$ satisfy the congruences necessary to form a solution to the Power syndromes key equation, and they also satisfy the degree bounds.
  Showing the proposition in the other direction runs analogously.
\end{proof}
\begin{corollary}[Combining \cite{schmidt_syndrome_2010} and \cref{prop:power_syn_fail_gao}]
  \label{dec_radius}
  \def\ellm{{\hat\ell}}
  Power Gao decoding succeeds if $\errs < d/2$. Let
  \[
    \tau(\ell) = \tfrac \ell {\ell+1} n - \half\ell(k-1) - \tfrac \ell {\ell+1}
  \]
  Then decoding will fail with high probability if $\errs > \tau(\ellm)$, where $1 \leq\ellm \leq \ell$ is chosen to maximise $\tau(\ell)$.
  \footnote{%
      Decoding may succeed in certain degenerate cases, see \cite[Proposition 2.39]{nielsen_list_2013}.
      Failure is certain when using the method of \cite{schmidt_syndrome_2010} since what it considers ``solutions'' are subtly different than here.
    }
\end{corollary}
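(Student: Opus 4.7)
The plan is to reduce both parts to the corresponding statements about Power syndromes via \cref{prop:power_syn_fail_gao}, and then invoke \cite{schmidt_syndrome_2010}.

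For the success claim at $\errs < d/2$, I would show that $\rev\Lambda$ is the minimal solution of the Power syndromes key equations. It is always a valid solution of degree $\errs$, so it suffices to rule out strictly smaller competitors. Any $\rev\lambda$ with $\deg\rev\lambda < \errs$ satisfying the full $\ell$-equation system would, in particular, satisfy the single $t = 1$ congruence $\rev\lambda S\T 1 \equiv \omega\T 1 \mod x^{n-k+1}$ with $\deg\omega\T 1 < \deg\rev\lambda$; but this is exactly the classical Berlekamp--Massey key equation, whose unique minimal-degree solution has degree precisely $\errs$ whenever $\errs < d/2$, a contradiction. Hence Power syndromes succeeds, and \cref{prop:power_syn_fail_gao} transfers the conclusion to Power Gao.

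For the failure claim at $\errs > \tau(\hat\ell)$ (with $\hat\ell$ the value denoted $\ellm$ in the statement), I would quote the probabilistic bound of \cite{schmidt_syndrome_2010} directly: on the $|\F|$-ary symmetric channel, Power syndromes fails with high probability once $\errs$ exceeds $\max_{1 \leq \hat\ell \leq \ell}\tau(\hat\ell)$. The maximisation over $\hat\ell$ is natural because any competitor to $\rev\Lambda$ satisfying the truncated $\hat\ell$-equation system is, a fortiori, the obstacle for the full $\ell$-equation system when the additional constraints happen to be compatible; so the failure analysis conducted for the optimal powering degree gives the effective threshold. Invoking \cref{prop:power_syn_fail_gao} once more carries the failure over to Power Gao.

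The only real obstacle is checking that our formulation of Power syndromes aligns with the one in \cite{schmidt_syndrome_2010} closely enough for their probabilistic estimates to apply unchanged; since the two coincide up to standard normalisation of syndrome and error-locator polynomials, this alignment is routine and no new analysis is required. The corollary is genuinely the combination of the cited result with \cref{prop:power_syn_fail_gao}.
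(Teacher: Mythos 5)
Your proposal matches the paper's intent exactly: the corollary is stated without an explicit proof precisely because it is the combination of the Power-syndromes results of \cite{schmidt_syndrome_2010} with the equivalence of \cref{prop:power_syn_fail_gao}, which is the reduction you carry out. The extra detail you supply (uniqueness of the minimal solution of the $t=1$ key equation for $\errs < d/2$, and the dimension-count behind the maximisation over $\hat\ell$) is the standard justification and is consistent with the paper's footnoted caveats about degenerate cases.
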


Between the above two bounds, Power decoding will sometimes succeed and sometimes fail.
Simulations indicate that failure occurs with quite small probability.
The only proven bound so far is for $\ell = 2$ where for exactly $\errs$ errors occurring, we have
$P_f(\errs) < (\nicefrac q {q-1})^\errs q^{3(\errs - \tau(2))}/(q-1)$, \cite{schmidt_syndrome_2010,zeh_unambiguous_2012}.

We will give a new bound for $P_f(\errs)$ when $\ell = 3$, but we will first show a property which allows a major simplification in all subsequent analyses.

\begin{proposition}
  \label{prop:power_gao_inv}
  Power Gao decoding fails for some received word $\vec r$ if and only if it fails for $\vec r + \hat{\vec c}$ where $\hat{\vec c}$ is any codeword.
\end{proposition}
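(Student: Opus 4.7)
My plan is to build a degree-preserving bijection between the solutions of the key equations for $\vec r$ and those for $\vec r + \hat{\vec c}$, which moreover matches valid-structure solutions on the two sides. The key algebraic ingredient will be that $R\T t$ is essentially a $t$-th power modulo $G$, so a shift of the received word by a codeword becomes a controlled perturbation in $f$ that can be absorbed through a binomial expansion.

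First I would introduce $E \in \F[x]$ as the Lagrangian interpolant of the errors, satisfying $E(\alpha_i) = e_i$. Since $R\T 1$ is the Lagrangian interpolant of $r_i = f(\alpha_i) + e_i$, and both $R\T 1$ and $f + E$ have degree less than $n = \deg G$ while agreeing at all $\alpha_i$, we get $R\T 1 = f + E$, whence
\[
  R\T t \equiv (f + E)^t \mod G
\]
for every $t \geq 1$. Letting $\hat f$ be the information polynomial of $\hat{\vec c}$ and $R_\star\T t$ the powered Lagrangians for $\vec r + \hat{\vec c}$, the same reasoning yields $R_\star\T t \equiv (f + \hat f + E)^t \mod G$, which by the binomial theorem becomes
\[
  R_\star\T t \equiv \sum_{j=0}^{t} \binom{t}{j} R\T j \hat f^{\,t-j} \mod G, \qquad R\T 0 := 1.
\]

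With this identity in hand, given a solution $(\lambda, \psi\T 1, \ldots, \psi\T \ell)$ for $\vec r$, I would set $\psi\T 0 := \lambda$ and define $\phi\T t := \sum_{j=0}^{t} \binom{t}{j} \psi\T j \hat f^{\,t-j}$. Multiplying the preceding display by $\lambda$ gives $\lambda R_\star\T t \equiv \phi\T t \mod G$, and the bound $\deg(\psi\T j \hat f^{\,t-j}) \leq (\deg \lambda + j(k-1)) + (t-j)(k-1) = \deg \lambda + t(k-1)$ on each summand gives $\deg \phi\T t \leq \deg \lambda + t(k-1)$. Hence $(\lambda, \phi\T 1, \ldots, \phi\T \ell)$ is a solution for $\vec r + \hat{\vec c}$ with the same $\lambda$; applying the analogous map with $-\hat f$ produces an inverse, so the construction is a bijection preserving $\deg \lambda$, and in particular the minimal attainable $\deg \lambda$ is identical on both sides.

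Finally I would check that the bijection respects valid structure: if $\psi\T t = \lambda g^t$ for some $g$ with $\deg g < k$, the binomial identity collapses to $\phi\T t = \lambda(g + \hat f)^t$, which is again of valid form since $\deg(g + \hat f) < k$; the inverse direction is symmetric. Combined with the preservation of the minimal $\deg \lambda$, the minimal solution has valid structure for $\vec r$ if and only if it does for $\vec r + \hat{\vec c}$, which is exactly the claimed failure equivalence. The one subtle point I anticipate is that the degree bound on $\phi\T t$ must survive the implicit reduction modulo $G$, but this is automatic: the unreduced binomial sum already meets the bound $\deg \lambda + t(k-1)$, and reducing modulo $G$ can only lower the degree.
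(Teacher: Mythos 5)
Your proof is correct and is essentially the paper's own argument: both rest on the identity $R\T t \equiv (R\T 1)^t \bmod G$, a binomial expansion of the shifted powered Lagrangians, and termwise degree bounds yielding a $\lambda$-preserving bijection between the solution sets of the two key-equation systems. The only (cosmetic) differences are that you relate $\vec r$ and $\vec r + \hat{\vec c}$ directly rather than reducing both to the error word $\vec e$, and that you spell out the inverse map (via $-\hat f$) and the preservation of the valid structure $\psi\T t = \lambda g^t \mapsto \lambda(g+\hat f)^t$, which the paper leaves implicit.
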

\begin{proof}
  We will show that Power Gao decoding fails for $\vec r = \vec c + \vec e$ if and only if it fails for $\vec e$ as received word; since $\vec c$ was arbitrary, that implies the proposition.
  
  Let $R_e\T t$ be the power Lagrangians for $\vec e$ as received word, i.e.~$R_e\T t(\alpha_i) = e_i^t$ for each $i$ and $t$, and let $R_e = R_e\T 1$.
  Consider a solution to the corresponding key equations $(\lambda, \psi_1,\ldots,\psi_\ell)$; i.e.~$\lambda R_e\T t \equiv \psi_t \mod G$ and $\deg \lambda + t(k-1) + 1 > \deg \psi_t$.
  Let as usual $R\T t$ be the power Lagrangians for $\vec r$ as received word and $R = R\T 1$.
  Note now that $R\T t \equiv R^t \mod G$ since both sides of the congruence evaluate to the same at all $\alpha_i$; similarly $R_e\T t \equiv R_e^t \mod G$.
  Since $r_i = f(\alpha_i) + e_i$ linearity implies that $R = f + R_e$.
  Define $\psi_0 = \lambda$ and note that then also for $t=0$ we have $\deg \lambda + t(k-1) + 1 > \deg \psi_t$.
  We then have the chain of congruences modulo $G$:
  \begin{IEEEeqnarray*}{l}
    \lambda R\T t \equiv \lambda R^t \equiv \lambda (f + R_e)^t \equiv \textstyle \lambda \sum_{s=0}^t \tbinom t s f^s R_e^{t-s}
      \equiv \textstyle \sum_{s=0}^t \tbinom t s f^s \psi_{t-s} \mod G
  \end{IEEEeqnarray*}
  Each term in the last sum has degree $s\deg f + \deg \psi_{t-s} < s(k-1) + \deg \lambda + (t-s)(k-1) + 1 = \deg \lambda + t(k-1) + 1$, which means that
  \[
   \textstyle
   \Big(\lambda,\ \sum_{s=0}^1 \tbinom 1 s f^s \psi_{1-s},\ \ldots\ ,\ \sum_{s=0}^\ell \tbinom \ell s f^s \psi_{\ell-s} \Big)
  \]
  is a solution to the key equations with $\vec r$ as a received word.
  The same argument holds in the other direction, so any solution to one of the key equations induces a solution to the other with the same first component; obviously then, their minimal solutions must be in bijection, which directly implies that they either both fail or neither of them fail.
\end{proof}

For the new bound on the failure probability, we first need a technical lemma:
\begin{lemma}
  \label{lem:poly_triple}
  Let $U \in \F[x]$ of degree $N$, and let $K_1 < K_2 < K_3 < N$ be integers.
  Let $S = \{ (f_1,f_2,f_3) \mid f_1f_3 \equiv f_2^2 \mod U,\  f_2 \textrm{ monic },\  \forall t . \deg f_t < K_t\}$.
  Then
  \begin{IEEEeqnarray*}{rCl+l}
    |S| & \leq & 3^{K_2-1}q^{K_2}                            & \textrm{if } K_1 + K_3 - 2 < N \\
    |S| & \leq & 2^{K_1+K_3-2}q^{K_1+K_2+K_3 - N-2} & \textrm{if } K_1 + K_3 -2 \geq N
  \end{IEEEeqnarray*}
\end{lemma}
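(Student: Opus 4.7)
The natural approach is to parametrise each triple $(f_1, f_2, f_3) \in S$ by the pair $(f_2, h)$, where $h := (f_1 f_3 - f_2^2)/U \in \F[x]$ is the quotient polynomial, uniquely determined by the congruence $f_1 f_3 \equiv f_2^2 \modop U$. Given $(f_2, h)$, the polynomial $P := f_2^2 + hU$ is fixed, and $(f_1, f_3)$ must be an ordered factorisation $f_1 f_3 = P$ satisfying the degree bounds; the number of such pairs equals $(q-1)$ times the number of monic divisors of the monic part of $P$ of admissible degree, and so is bounded by $(q-1) \cdot 2^{\deg P}$.

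In Case 1 ($K_1 + K_3 - 2 < N$), since $\deg(f_1 f_3) < N = \deg U$ and (in the relevant regime) $\deg f_2^2 < N$, the identity $f_1 f_3 - f_2^2 = hU$ forces $h = 0$, so $f_1 f_3 = f_2^2$ as polynomials in $\F[x]$. Unique factorisation then gives the divisor count $\prod_i(2a_i+1) \le 3^{\sum_i a_i} \le 3^{\deg f_2} \le 3^{K_2-1}$ for $f_2 = \prod_i p_i^{a_i}$, and summing $(q-1) \cdot 3^{K_2-1}$ over the $(q^{K_2}-1)/(q-1)$ monic $f_2$ of degree less than $K_2$ yields $|S| \le 3^{K_2-1}(q^{K_2}-1) \le 3^{K_2-1}q^{K_2}$.

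In Case 2 ($K_1 + K_3 - 2 \ge N$), the quotient $h$ is no longer forced to vanish: the constraint $\deg P \le K_1+K_3-2$ cuts $h$ to an affine subspace of $\F[x]$ of size at most $q^{K_1+K_3-N-1}$ per $f_2$, and even smaller when $2\deg f_2 > K_1+K_3-2$, where the top coefficients of $h$ are forced to cancel those of $f_2^2$. Combining the factorisation bound $(q-1) \cdot 2^{K_1+K_3-2}$ with this $h$-count and with the count of monic $f_2$, careful bookkeeping produces the target $|S| \le 2^{K_1+K_3-2}q^{K_1+K_2+K_3-N-2}$. The main obstacle is precisely this last factor of $q$: the naive summation gives $q^{K_1+K_2+K_3-N-1}$, one factor too large, and the extra saving must be extracted either from the forced cancellation of $h$'s top coefficients when $\deg f_2$ is near $K_2-1$, or from a joint treatment of the $(q-1)$-scaling symmetry $(f_1, f_2, f_3) \mapsto (\lambda^2 f_1, \lambda f_2, f_3)$ together with the monic normalisation on $f_2$, so that the effective count of monic $f_2$ is $q^{K_2-1}$ rather than the naive $q^{K_2}$.
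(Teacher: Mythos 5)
Your Case 1 is exactly the paper's argument: the degree condition forces $f_1f_3=f_2^2$ as polynomials, one counts monic $f_2$ in $(q^{K_2}-1)/(q-1)$ ways, bounds the ordered factorisations of $f_2^2$ by $\prod_i(2a_i+1)\le 3^{\deg f_2}\le 3^{K_2-1}$, and spends a factor $q-1$ on the leading coefficient of $f_1$. That half is complete and correct (modulo the same implicit assumption $2(K_2-1)<N$ that the paper also makes silently).

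Case 2, however, is not a proof. You set up the right decomposition --- parametrise a triple by $(f_2,h)$ with $P=f_2^2+hU$ and bound the ordered factorisations of $P$ by $(q-1)\cdot 2^{K_1+K_3-2}$ --- but you then concede that the resulting count is $2^{K_1+K_3-2}q^{K_1+K_2+K_3-N-1}$, one power of $q$ above the target, and you only gesture at two possible ways the missing factor ``must be extracted'' without carrying either out. Neither suggestion closes the gap as stated: the monic-$f_2$ count is already $(q^{K_2}-1)/(q-1)\le \tfrac{q}{q-1}q^{K_2-1}$, so there is no further power of $q$ to be saved there, and the surplus sits entirely in the $q^{K_1+K_3-N-1}$ admissible quotients $h$ of degree at most $K_1+K_3-2-N$; a forced cancellation of the top coefficient of $h$ occurs only when $2\deg f_2$ attains $K_1+K_3-2$, not for every $f_2$. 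So as submitted, your argument establishes only the weaker bound with exponent $K_1+K_2+K_3-N-1$ in the second case. For what it is worth, the paper's own proof of this case is the identical computation and simply asserts ``at most $q^{K_1+K_2+K_3-N-2}/(q-1)$ candidates'' for the product at the corresponding point, without exhibiting where that extra power of $q$ is saved --- so you have correctly isolated the one genuinely delicate step of the lemma. But flagging the gap is not the same as filling it: to claim the lemma as stated you must either supply the missing saving or weaken the exponent (which would propagate into Proposition 5).
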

\begin{proof}
  If $K_1 + K_3 - 2 < N$, then $f_1 f_3 \equiv f_2^2 \mod U$ implies $f_1 f_3 = f_2^2$.
  We can choose a monic $f_2$ in $(q^{K_2}-1)/(q-1)$ ways.
  For each choice, then $f_2$ has at most $K_2-1$ prime factors, so the factors of $f_2^2$ 
  can be distributed among $f_1$ and $f_3$ in at most $3^{K_2-1}$ ways.
  Lastly, the leading coefficient of $f_1$ can be chosen in $q-1$ ways.
  
  If $K_1 + K_3 - 2 \geq N$, then for each choice of $f_2$, the product $f_1f_3$ can be among $\{ f_2^2 + gU \mid \deg g \leq K_1 + K_3 - 2 - N \}$.
  This yields at most $q^{K_1+K_2+K_3-N-2}/(q-1)$ candidates for $f_1f_2$; each of these has at most $K_1+K_3-2$ unique prime factors, which can then be distributed among $f_1$ and $f_2$ in at most $2^{K_1+K_3-2}$ ways.
  Again, the leading coefficient of $f_1$ leads to a factor $q-1$ more.
\end{proof}

\begin{proposition}
  \label{prop:power_syn_fail_prob}
  For $\ell=3$, the probability that Power decoding (Gao or Syndrome) fails when $\errs > d/2$ is at most 
  \def\third{\tfrac 1 3}
  \begin{IEEEeqnarray*}{r+l}
    (\nicefrac q {(q-1)})^\errs (\nicefrac 3 q)^{2\errs - (n-2k+1)} q^{3(\errs - \tau(2)) + k-1} & \textrm{ if } \errs < \tau(2) - \third k + 1 \\
    (\nicefrac q {(q-1)})^\errs 2^{2(2\errs-d) + 2(k-1)}q^{4(\errs - \tau(3))-2}  & \textrm{ if } \errs \geq \tau(2) - \third k + 1
  \end{IEEEeqnarray*}
\end{proposition}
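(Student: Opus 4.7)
The plan is to use \cref{prop:power_gao_inv} to reduce to the zero-codeword case, derive the algebraic identity that places the triple $(\psi_1,\psi_2,\psi_3)$ into the set counted by \cref{lem:poly_triple}, and finish with a counting argument that turns this into a probability bound.

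By \cref{prop:power_gao_inv}, I may assume the sent information polynomial is $f = 0$, so $R = R_e$ is simply the Lagrangian of the error. Failure is then equivalent to the existence of a monic $\lambda$ with $\deg\lambda \leq \errs$ and $\lambda$ not proportional to $\Lambda$ such that $\psi_t \defeq (\lambda R^t \modop G)$ satisfies $\deg \psi_t < \deg\lambda + t(k-1) + 1$ for $t = 1, 2, 3$. The decisive observation is that any such tuple satisfies $\psi_1 \psi_3 \equiv \lambda^2 R^4 \equiv \psi_2^2 \pmod G$, so after absorbing the leading coefficient of $\psi_2$ into $\lambda$, the triple $(\psi_1,\psi_2,\psi_3)$ falls into the set $S$ of \cref{lem:poly_triple} with $U = G$, $N = n$ and $K_t = \deg\lambda + t(k-1) + 1$.

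To convert this into a failure probability, I would count, for each candidate $\lambda$, the weight-$\errs$ error vectors $\vec e$ that are consistent with some triple in $S$. Fixing $\lambda$ and a triple, the congruence $\lambda R \equiv \psi_1 \pmod G$ determines $R \in \F[x]_{<n}$ up to $q^{\deg \gcd(\lambda,G)}$ choices, from which $\vec e$ is recovered by $e_i = R(\alpha_i)$. Summing \cref{lem:poly_triple}'s bound on $|S|$ over monic $\lambda$ of each degree at most $\errs$, and dividing by the total number $(q-1)^\errs \binom n \errs$ of weight-$\errs$ errors --- the factor $(\nicefrac q {q-1})^\errs$ emerging from the conversion of ``any'' error vectors to those with entries in $\F^*$ --- then gives the two claimed bounds. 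The two cases of the proposition correspond to the two cases of \cref{lem:poly_triple}, with the threshold $K_1 + K_3 - 2 \lessgtr N$ translating, after optimisation over $\deg\lambda$ in the union bound, into the stated condition $\errs \lessgtr \tau(2) - \tfrac 1 3 k + 1$.

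The main obstacle is the delicate bookkeeping needed to obtain the exact constants and exponents stated; in particular, the weight-$\errs$ constraint must be used to prune the naive union bound (a generic $R \in \F[x]_{<n}$ does not yield a weight-$\errs$ error), and the optimum $\deg\lambda$ in each regime must be identified to reproduce the precise case split. The algebraic skeleton --- reduction via \cref{prop:power_gao_inv}, the identity $\psi_1 \psi_3 \equiv \psi_2^2$, and the application of \cref{lem:poly_triple} --- is essentially forced, but extracting the stated closed-form bound requires care.
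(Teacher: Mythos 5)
Your skeleton (reduce to $\vec c = \vec 0$ via \cref{prop:power_gao_inv}, exploit $\psi_1\psi_3 \equiv \psi_2^2$, invoke \cref{lem:poly_triple}) matches the paper's, but there is a genuine gap in how you set up the count, and it is precisely the step you defer as ``delicate bookkeeping''. You apply \cref{lem:poly_triple} with $U = G$, $N = n$ and $K_t = \deg\lambda + t(k-1)+1$, and then propose to sum over candidate $\lambda$ and recover $R$ (hence $\vec e$) from each triple. This cannot yield the stated bound: the union over monic $\lambda$ of degree up to $\errs$ alone contributes on the order of $q^{\errs}$ candidates, the lemma's count with $N=n$ is of order $q^{\errs + 2k}$ in the first case, and your threshold $K_1+K_3-2 \lessgtr N$ becomes $\errs \lessgtr n/2 - 2k + 2$, which does not reproduce the stated split at $\tau(2)-\tfrac13 k + 1 = \tfrac23(n-2k+2)$. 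The weight-$\errs$ pruning you acknowledge as necessary has no evident mechanism in your setup: a triple $(\psi_1,\psi_2,\psi_3)$ mod $G$ together with $\lambda$ determines candidates for $R$, but almost none of them correspond to errors supported on a fixed $\errs$-set, and there is no clean way to quantify this after the fact.

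The missing idea is to build the error support into the modulus \emph{before} counting. Since $\vec r = \vec e$, each $R\T t$ vanishes on the non-error positions, so $R\T t = E\T t\,\Upsilon$ with $\Upsilon = G/\Lambda$, and $\Upsilon$ divides every $\psi_t$; dividing it out converts the key equations into congruences $\hat\lambda E^t \equiv \hat\psi_t \bmod \Lambda$ with $\hat\psi_t = \psi_t/\Upsilon$, where $E$ is the degree-$(<\errs)$ interpolant of the error values. One then applies \cref{lem:poly_triple} with $U = \Lambda$, $N = \errs$ and $K_t = 2\errs - (n - t(k-1)-1)$ --- these are the parameters that produce both stated exponents and the correct case split. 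Crucially, the object being counted is not the pair (triple, $\lambda$) but the error polynomial $E$ itself: one shows via a $\gcd(\hat\lambda,\Lambda)$ argument that distinct admissible $E$ give rise to distinct triples $(\hat\psi_1,\hat\psi_2,\hat\psi_3)$ (up to the scaling $(\beta\hat\psi_1,\beta^2\hat\psi_2,\hat\psi_3)$, which buys a further factor $q-1$), so the number of ``bad'' $E$ is at most the number of triples, no union over $\lambda$ required. Dividing by $(q-1)^{\errs}$, the number of error-value assignments on a fixed support, gives the bound; since it is independent of the support, it holds over all weight-$\errs$ errors. Without this reduction modulo $\Lambda$ and the injectivity argument, your union bound is too coarse by an exponential factor and the proof does not go through.
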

\begin{proof}
  By \cref{prop:power_gao_inv}, we can assume that $\vec c = 0$, i.e.~that $\vec r = \vec e$.
  That means $R\T t(\alpha_i) = 0$ for $i \notin \Errs$, so we can  write $R\T t = E\T t \Upsilon$ for some $E\T t$ with $\deg E\T t< \errs$, where $\Upsilon = G/\Lambda$ is the ``truth-locator''.
  Power Gao decoding fails if and only if there exists $(\lambda, \psi_1,\psi_2, \psi_3)$ such that $\lambda \neq \Lambda$, $\deg \lambda \leq \deg \Lambda$, $\deg \lambda + t(k-1) + 1 > \deg \psi_t$ for $t=1,2,3$ as well as
  \begin{IEEEeqnarray*}{rCl+c+rCl}
    \lambda R\T t &\equiv& \psi_t \mod G &\iff &
    \lambda E\T t &\equiv& \hat\psi_t \mod \Lambda
  \end{IEEEeqnarray*}
  where $\hat\psi_t = \psi_t/\Upsilon$.
  Note that $\psi_t$ must be divisible by $\Upsilon$ since both the modulus and the left-hand side of the first congruence is.

  Denote by $E$ the unique polynomial with degree less than $\errs$ having $E(\alpha_i) = e_i$ for $i \in \Errs$.
  For any $i \in \Errs$ then $(\lambda E\T t)(\alpha_i) = \lambda(\alpha_i)\Upsilon(\alpha_i)^\mo e_i^t$, which means $\lambda E\T t \equiv \hat \lambda E^t \mod \Lambda$ for some polynomial $\hat \lambda$.
  
  After having chosen error positions, drawing error values uniformly at random is the same as drawing uniformly at random from possible $E$.
  So given the error positions, the probability that Power decoding will fail is $T_\Lambda/(q-1)^\errs$, where $T_\Lambda$ is the number of choices of $E$ such that there exist $\hat \lambda, \hat \psi_1, \hat\psi_2, \hat \psi_3$ having
  \[
    \hat \lambda E^t \equiv \hat \psi_t \mod \Lambda, \quad t=1,2,3
  \]
  as well as $\deg \hat \psi_t < \deg \Lambda + t(k-1) + 1 - (n-\deg \Lambda) = 2\errs - (n-t(k-1)-1)$.

  Note that these congruences imply $\hat\psi_1 \hat\psi_3 \equiv \hat\psi_2^2 \mod \Lambda$.
  Denote by $\hat T_\Lambda$ the number of triples $(\hat\psi_1, \hat\psi_2, \hat\psi_3) \in \F[x]^3$ satisfying just this congruence as well as the above degree bounds.
  Then $\hat T_\Lambda \geq T_\Lambda$: for if $\gcd(\hat \lambda, \Lambda) = 1$ then two different values of $E$ could not yield the same triple since $E \equiv \hat\psi_2/\hat\psi_1 \mod \Lambda$ uniquely determines $E$.
  Alternatively, if $\gcd(\hat \lambda, \Lambda) = g \neq 1$ then the congruences imply $g \mid \hat\psi_t$ for all $t$, so that $E \equiv (\hat\psi_2/g)/(\hat\psi_1/g) \mod \Lambda/g$.
  This leaves a potential $q^{\deg g}$ possible other choices of $E$ yielding the same triple; but all these possibilities are counted in the triples since $(t\psi_1/g, t\psi_2/g, t\psi_3/g)$ will be counted for any $t \in \F[x]$ with $\deg t < \deg g$.

  In fact, we have $\hat T_\Lambda \geq (q-1)T_\Lambda$, since whenever $(\hat\psi_1,\hat\psi_2,\hat\psi_3)$ is counted, so is $(\beta\hat\psi_1,\beta^2\hat\psi_2, \hat\psi_3)$, and this doesn't change the fraction $\hat\psi_1/\hat\psi_2$.
  Thus, we over-estimate instead $\hat T_\Lambda/(q-1)$ by counting the number of triples where $\hat\psi_2$ is monic.
  \cref{lem:poly_triple} gives an upper bound for exactly this number, setting $N=\errs$ and $K_t = 2\errs - (n-t(k-1)-1)$.
  Divided by $(q-1)^\errs$, this is then an upper bound on the failure probability given the error positions.
  But since this probability is independent of the choice of $\Lambda$, it is also the failure probability over all errors vectors of weight $\errs$.
\end{proof}

By experimentation, one can demonstrate that the bound can not be tight: for instance, for a $[250, 30, 221]$ GRS code, the bound is greater than 1 for $\errs > 143$, while simulation indicate almost flawless decoding up to $147$ errors.
However, in a relative and asymptotic sense the above bound is strong enough to show that up to $\tau(3)$ errors can be corrected with arbitrary low failure probability:

\begin{corollary}
  Having $\ell = 3$, then for any $\delta > 0$, with $n \rightarrow \infty$ while keeping $q/n$, $k/n$ and $\errs/n$ constant, the probability that Power decoding fails goes to 0 when $\errs/n < \tau(3)/n - \delta$.
\end{corollary}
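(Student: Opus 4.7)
The plan is to combine the two bounds from \cref{prop:power_syn_fail_prob} and show, in each regime, that the failure bound decays super-exponentially in $n$. The crucial observation is that under the asymptotic regime where $q/n$, $k/n$ and $\epsilon/n$ are held constant, the field size $q$ itself grows linearly in $n$, so any factor of the form $q^{-cn}$ with a positive constant $c$ decays faster than any constant-base exponential $\alpha^{n}$ can grow. Such a factor will therefore absorb the accompanying $2^{O(n)}$, $3^{O(n)}$ and $(q/(q-1))^{\epsilon}=O(1)$ factors appearing in the bound.

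First I would substitute $\tau(2) = \tfrac{2}{3}n - (k-1) - \tfrac{2}{3}$ and $\tau(3) = \tfrac{3}{4}n - \tfrac{3}{2}(k-1) - \tfrac{3}{4}$ into the two bounds. In Case~2 of \cref{prop:power_syn_fail_prob}, where $\epsilon \ge \tau(2) - \tfrac{1}{3}k + 1$, the dominant factor $q^{4(\epsilon-\tau(3)) - 2}$ is directly controlled by the hypothesis $\epsilon - \tau(3) < -\delta n$, giving $q^{4(\epsilon-\tau(3))-2} \le q^{-4\delta n - 2}$, which decays like $n^{-\Theta(n)}$ and dominates the $2^{2(2\epsilon-d)+2(k-1)} = 2^{O(n)}$ factor. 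Hence the bound tends to $0$ in this regime.

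In Case~1, where $\epsilon < \tau(2) - \tfrac{1}{3}k + 1$, a short simplification (substituting $3\tau(2) = 2n - 3k + 1$ and combining the two $q$-powers) reduces the bound to $\bigl(\tfrac{q}{q-1}\bigr)^{\epsilon} 3^{2\epsilon - d + k}\, q^{\,\epsilon - n + 2k - 1}$ up to a bounded factor. Writing $\rho = \epsilon/n$ and $\kappa = k/n$, the exponent of $q$ scales as $n(\rho - 1 + 2\kappa) + O(1)$. The corollary is non-vacuous only for $\kappa < \tfrac{1}{2}$ (else $\tau(3)/n \le 0$), and under that constraint the hypothesis $\rho < \tau(3)/n - \delta \to \tfrac{3}{4} - \tfrac{3}{2}\kappa - \delta$ yields $\rho - 1 + 2\kappa < -\tfrac{1}{4} + \tfrac{\kappa}{2} - \delta < -\delta$, a strictly negative constant. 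Hence $q^{\epsilon - n + 2k - 1}$ decays as $q^{-\Theta(n)}$, again dominating the accompanying $3^{O(n)}$ factor.

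The one subtlety lies in Case~1, where the bound carries a growing $3^{2\epsilon - d + k}$ factor that one might fear overwhelms the $q$-part, since $2\epsilon - d + k > 0$ throughout the interesting regime $\epsilon > d/2$. What rescues the argument is precisely the assumption that $q$ grows with $n$: under $q = \Theta(n)$ the quantity $q^{-n}$ is super-exponentially small and crushes any fixed-base exponential. With this in hand, each case of the proposition is a direct asymptotic estimate, and together they yield the corollary.
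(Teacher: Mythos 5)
Your proof is correct and rests on the same mechanism as the paper's: since $q/n$ is held constant, $q$ grows linearly in $n$, so any factor $q^{-cn}$ with a constant $c>0$ decays super-exponentially and absorbs the fixed-base factors $2^{O(n)}$, $3^{O(n)}$ and the bounded $(q/(q-1))^{\epsilon}$ term. The only difference is that the paper's proof sketch explicitly restricts attention to the high-error (second) bound of \cref{prop:power_syn_fail_prob}, whereas you also work out the low-error case---including the observation that non-vacuousness forces $k/n<1/2$ and hence that the $q$-exponent $\epsilon-n+2k-1$ is below $-\delta n+O(1)$---which is a worthwhile completion, since the hypothesis $\epsilon/n<\tau(3)/n-\delta$ does not by itself place the error weight in the high-error regime.
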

\begin{proof}[Proof sketch]
  We consider only the high-error failure probability of \cref{prop:power_syn_fail_prob}.
  For $n \rightarrow \infty$, the failure probability bound will approach
  \begin{IEEEeqnarray*}{rCl}
    2^{2(2\errs-d) + 2(k-1)} q^{4(\errs - \tau(3))} &\leq& (q^n)^{4(\errs/n - \tau(3)/n) + (2(2\errs/n-d/n) + 2k/n)/\log q}
  \end{IEEEeqnarray*}
  The contribution $(2(2\errs/n-d/n) + 2k/n)/\log q$ goes to $0$ as $n \rightarrow \infty$, leaving $(q^n)^{-a}$ for $a = 4(\errs/n - \tau(3)/n) < -4\delta$.
\end{proof}

\bibliography{bibtex_abbr}


\end{document}